\newtheorem{thm}{Theorem}
\newtheorem{theorem}{Theorem}
\newtheorem{lem}[thm]{Lemma}
\newcommand{\beq}{\begin{equation}}
\newcommand{\eeq}{\end{equation}}
\newcommand{\bea}{\begin{eqnarray}}
\newcommand{\eea}{\end{eqnarray}}
\newcommand{\bean}{\begin{eqnarray*}}
\newcommand{\eean}{\end{eqnarray*}}
\newcommand{\bit}{\begin{itemize}}
\newcommand{\eit}{\end{itemize}}
\newcommand{\ben}{\begin{enumerate}}
\newcommand{\een}{\end{enumerate}}
\newcommand{\blem}{\begin{lem}}
\newcommand{\elem}{\end{lem}}
\newcommand{\bthm}{\begin{thm}}
\newcommand{\ethm}{\end{thm}}
\newcommand{\stale}{stale\xspace}
\newcommand{\Stale}{Stale\xspace}
\newcommand{\stalenode}{s}
\newcommand{\helper}{updated\xspace}
\newcommand{\helpernode}{u}
\newcommand{\supth}{^\textrm{th}}
\newcommand{\msg}{m}
\newcommand{\msgvec}{\boldsymbol{m}}
\newcommand{\msgveca}{\msgvec^{(a)}{}}
\newcommand{\msgvecb}{\msgvec^{(b)}{}}
\newcommand{\msgvecc}{\msgvec^{(c)}{}}
\newcommand{\nodesize}{A}
\newcommand{\dparameternobracket}{n-1}
\newcommand{\dparameter}{(\dparameternobracket)}
\newcommand{\st}{\stalenode}          
\newcommand{\up}{\helpernode}          
\newcommand{\vpsi}[4]{\gamma_{#1,#2} (#3, #4)}   
\newcommand{\sdelta}{\delta} 
\newcommand{\etas}[2]{\eta_{#1,#2}} 
\newcommand{\xis}[2]{\boldsymbol{\xi}_{#1,#2}}
\newcommand{\p}{p} 
\newcommand{\np}{P} 
\newcommand{\numstripes}{\np}
\newcommand{\modified}{modified\xspace}
\newcommand{\fragments}{fragments\xspace}
\newcommand{\genstale}{G_s}
\newcommand{\func}{f}
\newcommand{\genmsr}{\Gamma}
\newcommand{\genmsrs}[1]{\Gamma^{(#1)}}
\title{Fundamental Limits on Communication for Oblivious Updates in Storage Networks}
\author{Preetum Nakkiran, Nihar B. Shah, K. V. Rashmi\\Department of EECS, University of California, Berkeley\\{\tt \{preetum,nihar,rashmikv\}@berkeley.edu}\thanks{IEEE Global Communications Conference (GLOBECOM) 2014.}
\vspace{-.11cm}
}
\begin{document}
\maketitle
\thispagestyle{empty}

\begin{abstract}
In distributed storage systems, storage nodes intermittently go offline for numerous reasons. On coming back online, nodes need to update their contents to reflect any modifications to the data in the interim. In this paper, we consider a setting where no information regarding modified data needs to be logged in the system. In such a setting, a `stale' node needs to update its contents by downloading data from already updated nodes, while neither the stale node nor the updated nodes have any knowledge as to which data symbols are modified and what their value is. We investigate the fundamental limits on the amount of communication necessary for such an \textit{oblivious} update process.

We first present a generic lower bound on the amount of communication that is necessary under any storage code with a linear encoding (while allowing non-linear update protocols). This lower bound is derived under a set of extremely weak conditions, giving all updated nodes access to the entire modified data and the stale node access to the entire stale data as side information. We then present codes and update algorithms that are optimal in that they meet this lower bound. Next, we present a lower bound for an important subclass of codes, that of linear Maximum-Distance-Separable (MDS) codes. We then present an MDS code construction and an associated update algorithm that meets this lower bound. These results thus establish the \textit{capacity} of oblivious updates in terms of the communication requirements under these settings.
\end{abstract}

\begin{figure*}[t]
\centering
\includegraphics[width=.9\textwidth]{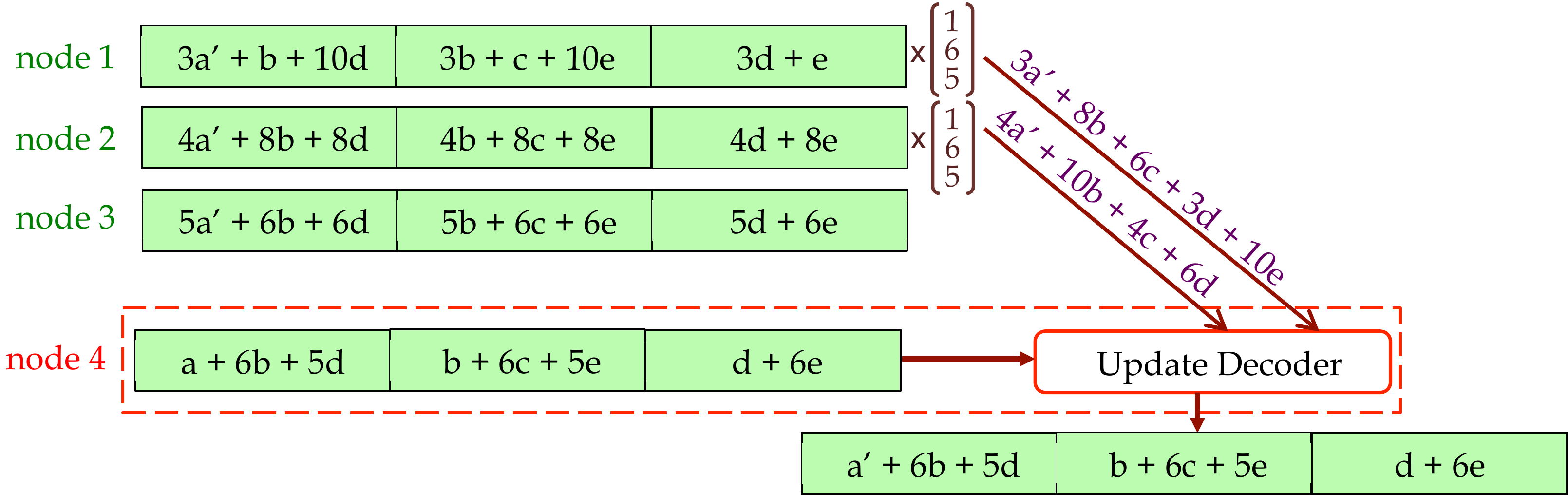}
\caption{A code and an update algorithm that performs optimal oblivious updates. The code operates over the finite field $\mathbb{F}_{11}$. The symbol $a$ was modified to $a'$ during a period when node $4$ was temporarily unavailable/offline. Upon returning, node $4$ updates its stored data despite all nodes being oblivious to the identity and the value of the modified symbol. The update protocol downloads a total of $2 \log_2 11$ bits (i.e., one symbol of $\mathbb{F}_{11}$ each from any two of the \helper nodes), which is the minimum possible.}
\label{fig:example}
\end{figure*}

\vspace{-.5cm}
\section{Introduction}
In recent years, there has been a tremendous increase in the amount of digital data stored. This has lead to the popular paradigm of distributed storage wherein the data to be stored is partitioned into \fragments and stored across multiple storage nodes connected through a network. This includes peer-to-peer storage systems~\cite{totalRecall_etal,rowstron2001storage_short,crashplan,spacemonkey}, globally distributed storage systems~\cite{kubiatowicz2000oceanstore_etal,cleversafe_erasure_codes_short}, data-center based storage systems~\cite{ghemawat2003google_short,shvachko2010hadoop_short}, and caching networks~\cite{tang2008benefit_short}. These distributed storage systems store data in a redundant fashion, using either replication or erasure coding, in order to ensure reliability and availability in the face of frequent unavailability events. Under replication, multiple copies of the fragments are stored on different nodes, for example, the Google File System and the Hadoop Distributed File System use 3-replication as the default strategy for introducing redundancy. Under erasure coding, the data fragments are encoded using erasure codes such as Reed-Solomon codes and the encoded fragments are stored on different nodes~\cite{borthakur2008hdfs}.


The storage nodes in the system can go offline for certain intervals of time for various reasons. For instance, there is frequent node churn in peer-to-peer networks as nodes join and leave the network at their will; software issues, maintenance shutdowns, reboots, etc. cause nodes to go offline in distributed storage systems~\cite{ford2010availability_etal,rashmi2013hotstorage_etal}; machines are switched off for certain intervals of time for power savings in some data centers~\cite{lin2013dynamic_short}.

We consider the setting of \textit{mutable} data where the data may be modified during its lifetime (as opposed to immutable data which is read-only). When data gets modified, all stored fragments pertaining to this data (either the replicas or the encoded fragments) need to be updated to reflect this modification. When a node comes back online, its contents need to be updated to reflect any modifications to the data that occurred when the node was offline. We term such a node as a {\it \stale} node.

\newcommand{\centralized}{complete-information\xspace}
One approach towards enabling \stale nodes to update their contents is to centrally track all modifications to the data. Under such a {\it \centralized} approach~\cite{blaum1999lowest_short,xu1999low_short,plank2009raid,tamo2012access_supershort},  the data in a stale node is updated via communication with a central node which provides the precise value of the updated fragments to the stale node. However, this approach has the drawback of requiring the system to centrally keep a log of every modification of the data. This paper, on the other hand, considers an entirely distributed approach in which the system does not store any information regarding the modified data. Here, a \stale node needs to update its contents by communicating and downloading data from other updated storage nodes present in the system. Neither the \stale node nor the updated nodes are aware of what data was modified and what its updated value is. We term such an update process as an \textit{oblivious update}. In this paper, we seek to establish the fundamental limits on the amount of communication required to perform oblivious updates.

In a distributed system, one could constantly store and maintain, in every storage node, a log of all updates. When required to update a \stale node, one could use these logs to identify and transmit the updated data. Oblivious updates, on the other hand, do not necessitate any such additional storage, and also help avoid logistical issues in maintaining any logs. As we will show later in the paper, the amount of communication required to perform an oblivious update is, in fact, not much larger than the amount of communication required for updates in the complete-information setting.

A related line of work is that on maintaining consistency in databases~\cite{vogels2009eventually_short,demers1994bayou_etal} in the presence of modifications to the data. The primary problems here are of ensuring that read requests are served from up-to-date data, and maintaining availability of the data. The problem of set reconciliation~\cite{minsky2003set_short} also has similarities with the problem of oblivious updates. The set reconciliation problem involves two entities, each of whom has some set of values, and the goal is to enable these two entities to learn the difference between their sets with the minimum amount of communication. 

Following the literature on classical \centralized updates~\cite{blaum1999lowest_short,xu1999low_short,plank2009raid,tamo2012access_supershort}, in this paper we study the case when at most a single \textit{symbol} is modified. Here, a `symbol' refers to the smallest granularity of data that can be modified. The case of a single-symbol update is a stepping stone to the more general case of multiple symbol-updates. 
Further, motivated by practical considerations, we restrict our attention to linear codes, i.e., where the encoding process for storage is linear. Although the storage codes are linear, the update protocol is allowed to involve non-linear computations as well, thereby leading to more general bounds.

In this paper we investigate the fundamental limits on the amount of data that needs to be communicated to perform oblivious update of a \stale node when a single message symbol is modified. We show that under any code that has a linear encoding (over a finite field of size $q$), including the special case of `replication', a \stale node needs to download at least $2 \log_2 q$ bits when any one of the message symbols is \modified (Section~\ref{sec:lower_any}). This lower bound is obtained via a genie-based argument under a set of extremely weak conditions allowing infinite connectivity for the \stale node and giving the entire modified data to all the updated nodes and the entire stale data to the stale node as side information. We then present codes and update algorithms that, perhaps surprisingly, meet these lower bounds on communication (Section~\ref{sec:mbr}). Here, oblivious updates are preformed by having a \stale node download only $2 \log_2 q$ bits, while the amount of data stored in the node may be arbitrarily large. These codes are also optimal with respect to the `storage-bandwidth tradeoff' for distributed storage~\cite{dimakis2010network_etal}. We then investigate the class of codes that are `Maximum-Distance-Separable' (MDS). MDS codes are a popular choice for distributed storage since they provide maximum reliability with minimum storage overheads. When the linear code is restricted to be MDS, we establish a lower bound on the amount of communication required for oblivious update (Section~\ref{sec:lower_mds}), and additionally, present an MDS code and an update algorithm that meets this lower bound (Section~\ref{sec:achieve_mds}). These results thus establish the \textit{capacity} of the communication requirements for oblivious updates under linear codes.

The next section formalizes the problem setting and presents an illustrative example.



\section{Problem Description}
\subsection{Problem Setting}\label{sec:model}
Consider $B$ symbols of data, termed the \textit{message}, that are to be stored across $n$ storage nodes. Each symbol of data is assumed to belong to some finite field $\mathbb{F}_q$ of size $q$. Each node has a capacity of storing $\nodesize \geq 2$ symbols over $\mathbb{F}_q$. The data is stored across the nodes using a code that is linear over $\mathbb{F}_q$. 
Now suppose some storage node, say node $\stalenode$, was busy or offline for some period of time. In this period, suppose one of the $B$ message symbols was updated. The remaining nodes now store (encodings of) the updated data. However, node $\stalenode$ still contains stale data, and we will call this node as the the \textit{\stale node}. Now, when node $\stalenode$ comes back up, its contents must be updated to reflect the updated message. To this end, the \stale node connects to one or more of the other nodes, and downloads some functions of the data stored in them. The goal is to minimize this amount of download.

In the setup we consider, none of the nodes are required to store any information about the identity or the value of the symbol that was updated. The update of the \stale node's data is thus \textit{oblivious} of the update in the message. We do assume, however, that the \stale node knows that at most a single symbol was updated. We also assume that the code is linear, i.e., each nodes stores $\nodesize$ linear combinations of the $B$ message symbols. Note that we only assume that the underlying encoding of the stored data is linear, and the data passed during an update operation may comprise arbitrary (linear or non-linear) functions of the data stored in the \helper nodes.

The second half of the paper considers a very popular subclass of codes known as \textit{Maximum-Distance-Separable (MDS)} codes. MDS codes satisfy the two following properties:
(a) The entire message of $B$ symbols can be recovered from the data stored in \textit{any} $k$ of the $n$ nodes, for some pre-defined parameter $k$.  This ensures that the storage system can tolerate the failure of any arbitrary $(n-k)$ of the $n$ nodes, and furthermore, ensures high availability of the data since it can be recovered from any $k$ nodes. (b) The storage requirement at each node is $\nodesize = \frac{B}{k}$, which is the minimum possible when satisfying the first property. Again, the goal is to minimize the amount of download required to perform an oblivious update.


\textit{Notational conventions:} For vectors $\mathbf{v}_1$ and $\mathbf{v}_2$ of equal lengths, $d_H(\mathbf{v}_1,\mathbf{v}_2)$ will denote the Hamming distance between them. For any positive integer $r$, $[r]$ will denote the set $\{1,\ldots,r\}$. Vectors will be column vectors by default.

\subsection{Example}\label{sec:example}
We illustrate the problem setting with an example of a storage code and an update algorithm that are optimal. The code, shown in Fig.~\ref{fig:example}, operates in the finite field $\mathbb{F}_{11}$ of size $11$. The message comprises $B=5$ symbols $\{a,b,c,d,e\}$, each drawn from $\mathbb{F}_{11}$. The message is encoded and stored across $n=4$ storage nodes as shown in the figure. One can verify that the entire message is recoverable from \textit{any} two of the four nodes, thus making the storage system tolerant to the failure of any two of the four nodes.

Now suppose node $4$ was unavailable for some period of time, during which message symbol $a$ was updated to $a'$. The three other nodes store the updated data, and (`stale') node~$4$ must update its own data by downloading data from the three other nodes. The nodes do not keep any record of what is updated and by how much, i.e., do not know that symbol $a$ was updated and that its new value is $a'$. The update protocol is thus required to be oblivious of the update.

The lower bounds derived subsequently in Section~\ref{sec:lower_any} dictate the necessity of downloading at least $2 \log_2 11$ bits for the update. The following update protocol meets this lower bound, with the \stale node downloading one symbol of $\mathbb{F}_{11}$ each from two arbitrary \helper nodes.
The \stale node contacts \textit{any} two other nodes, say nodes $1$ and $2$, and asks for the inner product of their respective data with $[1~~6~~5]$. The two nodes return the values of $(3a' + 8b + 6c + 3d + 10e)$ and $(4a' + 10b + 4c + 6d)$ respectively. Of course, the \stale node does not know that this received data is computed with $a'$ and not $a$. Next, the \stale node computes an inner product of its own data with $[3~~1~~10]$ to get the value of $(3a + 8b + 6c + 3d + 10e)$, and an inner product of its own data with $[4~~8~~8]$ to get the value of $(4a + 10b + 4c + 6d)$. Subtracting these from the data received from the two other nodes, the \stale node obtains the values of $\{3(a'-a),4(a'-a)\}$. If both these values are zero, then no symbol was updated, and the algorithm terminates. If not, then the algorithm continues in the following manner. Since the identity of the updated symbol is not known, from the perspective of the \stale node, these two values could correspond to either $\{3(a'-a),4(a'-a)\}$ or $\{8(b'-b),10(b'-b)\}$ or $\{6(c'-c),4(c'-c)\}$ or $\{3(d'-d),6(d'-d)\}$ or $\{10(e'-e),0(e'-e)\}$. The \stale node now takes the ratio of the two values; this ratio $3:4$ uniquely identifies that symbol $a$ was updated. Multiplying the first value $3(a'-a)$ by $3^{-1}$ gives the value of the update $(a'-a)$. This amount is added to the first symbol of the \stale node, and the result $(a'+6b+5d)$ is stored as the updated first symbol of the \stale node.

This storage code and update protocol are generalized to arbitrary system parameters in Section~\ref{sec:MBR_encoding}.

\section{Lower Bounds for Arbitrary Linear Codes}\label{sec:lower_any}
This section derives lower bounds on the amount of download for oblivious update under any arbitrary code with linear encoding. Note that although we consider the encoding to be linear, we allow the update operation to be executed via any arbitrary (linear or non-linear) functions.

\begin{theorem}\label{thm:lower_any}
Consider a scenario where the \stale node is allowed to connect to any arbitrary number of updated nodes. Furthermore, suppose a genie provides all \helper nodes with all the updated message symbols and the \stale node with all the \stale message symbols as side information (the nodes still do not know the identity or value of the symbol that was updated). In order to update the data stored in the \stale node, it must download a total of at least $2\log_2 q$ bits.
\end{theorem}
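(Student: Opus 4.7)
The plan is to proceed by contradiction, assuming the total download is strictly less than $2\log_2 q$ bits, i.e., at most a single symbol of $\mathbb{F}_q$. First I would fix the \stale message $m\in\mathbb{F}_q^B$ and denote by $C\in\mathbb{F}_q^{\nodesize\times B}$ the \stale node's linear encoding matrix, so that its stored content before and after the update is $Cm$ and $Cm'$ respectively. Because the \stale node already knows $m$ as side information, its real task is to recover the update vector $C(m'-m)$; for a modification at position $\ell$ of magnitude $\delta$, this is $\delta\,C_\ell$, where $C_\ell$ denotes the $\ell$-th column of $C$. Assuming the $\nodesize\geq 2$ stored symbols are a non-degenerate (linearly independent) encoding of the message, $\mathrm{rank}(C)\geq 2$, so I can pick two columns $C_i, C_j$ that are linearly independent.

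Next, I would use the genie assumption---every \helper node has $m'$---to collapse the entire download into a single function $f(m')$ whose range has size at most $q$ under the contradiction hypothesis. Identify the range with $\mathbb{F}_q$. Correctness demands that $f(m')$ together with $m$ uniquely determine $C(m'-m)$, so $f(m_1')=f(m_2')$ must imply $C(m_1'-m_2')=0$ for any pair of candidate updated messages $m_1',m_2'$ in the set $\{m+\delta e_\ell : \ell\in[B],\,\delta\in\mathbb{F}_q\}$.

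The key step is to restrict $f$ to the two lines $L_i=\{m+\delta e_i : \delta\in \mathbb{F}_q\}$ and $L_j=\{m+\delta e_j : \delta\in \mathbb{F}_q\}$. Because $C_i\neq 0$, the $q$ update vectors $\{\delta C_i : \delta\in \mathbb{F}_q\}$ are pairwise distinct, so $f|_{L_i}$ is injective, and by pigeonhole it is a bijection $L_i\to \mathbb{F}_q$; the same holds for $f|_{L_j}$. Surjectivity of $f|_{L_j}$ then yields some $\delta'\in \mathbb{F}_q$ with $f(m+\delta' e_j)=f(m+e_i)$, and correctness forces $C_i=\delta' C_j$. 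Injectivity on $L_i$ rules out $\delta'=0$ (otherwise $f(m+e_i)=f(m)$, which would imply $C_i=0$), so $C_i$ and $C_j$ are linearly dependent, contradicting their choice. Therefore the download must be at least two $\mathbb{F}_q$-symbols, i.e., at least $2\log_2 q$ bits.

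The main obstacle I expect is the initial reduction to $\mathrm{rank}(C)\geq 2$: one must separately handle the degenerate rank-one case, where all $\nodesize$ stored symbols are $\mathbb{F}_q$-multiples of a single linear combination of the message---a situation in which the node effectively stores one symbol of information and a single-symbol download trivially suffices, so it is outside the intended scope of an honest $\nodesize\geq 2$ code. Past that reduction, the argument is a clean pigeonhole on two one-dimensional affine slices of $\mathbb{F}_q^B$, with the only delicate point being the exclusion of $\delta'=0$, which is immediate from the already-established injectivity.
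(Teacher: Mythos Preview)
Your argument has a quantitative gap at the very first reduction. Negating ``download $\geq 2\log_2 q$ bits'' yields ``the range of $f$ has cardinality strictly less than $q^2$'', not ``at most $q$''. With the range allowed to have up to $q^2-1$ elements, the injective map $f|_{L_i}$ from a $q$-element line need not be onto the range, so you cannot manufacture a collision between $L_i$ and $L_j$ by the bijection argument you describe. More fundamentally, for any \emph{fixed} stale message $m$, the admissible updated messages form a union of $B$ lines through $m$; requiring $f$ to separate all their distinct $C$-images forces the range to have size at most on the order of $1+B(q-1)$, which can be far below $q^2$. A fixed-$m$ argument therefore cannot reach the stated bound; as written, your proof only establishes that the download must exceed $\log_2 q$ bits.

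The paper closes this gap by \emph{not} fixing the stale message in advance. It pigeonholes $f$ over a two-dimensional affine set of $q^2$ messages (varying two coordinates on which the encoding acts injectively---precisely the role of your linearly independent columns $C_i,C_j$), obtains a collision $f(m^{(a)})=f(m^{(b)})$ with $d_H(m^{(a)},m^{(b)})\leq 2$, and only then selects the stale message $m^{(c)}$ to be a common Hamming-$1$ neighbor of both. Your choice of two independent columns is exactly the right ingredient; what is missing is the adaptive choice of the stale message \emph{after} the collision, so that both colliding points become legitimate single-symbol updates of the \emph{same} stale message.
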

\begin{IEEEproof}
Let $\msgvec$ denote the vector of the $B$ \stale symbols, and let $\msgvec'$ denote the vector of the $B$ updated symbols, with $d_H(\msgvec,\msgvec')\leq 1$. Let $\genstale$ denote the $(\nodesize \times B)$ generator matrix of the \stale node, i.e., the \stale node stores $\genstale \msgvec$, and wants $\genstale \msgvec'$. Assume without loss of generality that the $\nodesize$ rows of $\genstale$ are linearly independent.  Note that our genie has also provided the entire \stale message $\msgvec$ to the \stale node.

Since the genie has provided each \helper node with all the updated message symbols $\msgvec'$, one can assume without loss of generality that the \stale node connects to only one \helper node. On being contacted by the \stale node, the \helper node must return some function of the data: let $\func$ denote this function, i.e., the \helper node returns $\func(\msgvec')$ to the \stale node. We will now show that the cardinality of the range of $\func$ cannot be less than $q^2$, thus necessitating a download of at least $2\log_2 q$ bits. We employ a contradiction-based argument, for which we assume that the cardinality of the range of $\func$ is strictly smaller than $q^2$.

The linear independence of the $\nodesize$ rows of $\genstale$ implies existence of $\nodesize$ coordinates $\ell_1,\ldots,\ell_\nodesize \in [B]$ of $\msgvec$ such that for every fixed value of $\msgvec \backslash \{\msg_{\ell_1},\ldots,\msg_{\ell_\nodesize}\}$, the map $(\msg_{\ell_1},\ldots,\msg_{\ell_\nodesize}) \rightarrow \genstale \msgvec$ is a bijection. Without loss of generality, let $\ell_i=i~\forall i \in [\nodesize]$.
Consider the set of $q^2$ messages of the form $[\msg_1', \msg_2', 0\ldots,0]$.
Since the range of $\func$ contains strictly fewer than $q^2$ values,
there must exist some two distinct messages, say $\msgveca$ and $\msgvecb$, in the aforementioned set of $q^2$ messages, for which
$\func(\msgveca)=\func(\msgvecb)$.

Now we know that $\msgveca \neq \msgvecb$, $\func(\msgveca)=\func(\msgvecb)$, $\genstale \msgveca \neq \genstale \msgvecb$, and $d_H(\msgveca ,\msgvecb)\leq 2$.
The last property implies existence of some $\msgvecc \in (\mathbb{F}_q)^B$ such
that $d_H(\msgvecc ,\msgveca)\leq 1$ and $d_H(\msgvecc,\msgvecb)\leq 1$.
Finally, suppose $\msgvecc$ is the \stale message.
Now, $\msgveca$ and $\msgvecb$ are two possible candidates for the updated message.
The \stale node has access to the same data in both cases: $\genstale \msgvecc$ as its own \stale data, and $\func(\msgveca)=\func(\msgvecb)$ downloaded from the \helper node.
This prevents the \stale node from distinguishing between $\msgveca$ and $\msgvecb$ as the
updated message.
However, the updated data at the stale node must be different
(since $\genstale \msgveca \neq \genstale \msgvecb$),
making it necessary to distinguish between the two cases. This causes a contradiction.
\end{IEEEproof}

\section{Codes Achieving Lower Bounds}\label{sec:mbr}
The lower bound derived in Theorem~\ref{thm:lower_any} is in the presence of a very helpful genie. This section presents codes and update algorithms that meet this bound in the absence of this genie. 
These upper bounds are obtained by proving the existence of codes meeting these bounds, and towards this, we employ the product-matrix framework of~\cite{rashmi2011productmatrix_supershort}. Interestingly, the proposed codes are also optimal with respect to the \textit{storage-bandwidth tradeoff} derived in~\cite{dimakis2010network_etal}. The update algorithms presented here require the stale node to connect to \textit{any two} updated nodes.

\subsection{Encoding}\label{sec:MBR_encoding}
The code is associated to an additional parameter $k \in [n-1]$, and has the
property that the entire message can be recovered from any $k$ of the nodes.
Assume that $B$ is divisible by $\left(k(n-1)-\frac{k(k-1)}{2}\right)$.\footnote{If not, then append an appropriate number of zeros to the message. Since the amount of data $B$ is typically much larger than $n$ and $k$, this operation is relatively inexpensive.}
\;Let $\numstripes := \frac{B}{\left(k(n-1)-\frac{k(k-1)}{2}\right)}$.

Under the proposed code, each node is required to store $\nodesize =  \dparameter\numstripes$ symbols over $\mathbb{F}_q$. The value of $q$ will be specified later.

Construct $\numstripes$ \textit{symmetric} matrices $\left\{M_\p\right\}_{\p \in [\np]}$, each of size $(\dparameter \times \dparameter)$, in the following manner. In each matrix $\{M_\p\}_{\p \in [\np]}$, set the bottom-right $((\dparameternobracket-k)\times(\dparameternobracket-k))$ submatrix to zero. Each of these (symmetric) matrices now have $\left(k(n-1)-\frac{k(k-1)}{2}\right)$ free elements remaining. Partition the $B$ message symbols into $\numstripes$ sets of $\left(k(n-1)-\frac{k(k-1)}{2}\right)$ symbols each.  For each $\p \in [\np]$, populate the remaining free elements of matrix $M_\p$ with the message symbols of the $\p\supth$ set.


Construct vectors $\left\{\boldsymbol{\psi}_\ell \right\}_{\ell \in [n]}$, each of length $\dparameter$, and scalars $\left\{\eta_{\ell,\p} \right\}_{\ell \in [n], \p \in [\np]}$ that satisfy:\\
(a) every submatrix of $\left[\boldsymbol{\psi}_1~\cdots~\boldsymbol{\psi}_n\right]$ is of full rank\\
(b) for every $(\up_1,\up_2,s) \in [n]^3$ such that $\up_1\neq \up_2 \neq \st$, and every $(\p,i,j)\in [\np] \times [\dparameter]^2$ and $(\p',i',j')\in [\np] \times [\dparameter]^2$ such that  $(\p,i,j) \neq (\p',i',j')$,
\begin{multline}
{\etas{\up_1}{\p}\vpsi{\up_1}{\st}{i}{j}} {\etas{\up_2}{\p'}\vpsi{\up_2}{\st}{i'}{j'}} \\ \neq {\etas{\up_1}{\p'}\vpsi{\up_1}{\st}{i'}{j'}}{\etas{\up_2}{\p}\vpsi{\up_2}{\st}{i}{j}}~,\nonumber
\end{multline}
where
\beq
\vpsi{\up}{\st}{i}{j} :=
\begin{cases}
\psi_{\up,i} \psi_{\st,j}+\psi_{\up,j} \psi_{\st,i} &\quad \textrm{if}\quad i \neq j\\
\psi_{\up,i} \psi_{\st,i} &\quad \textrm{otherwise}~.
\end{cases}
\label{eq:gamma}
\eeq
Each of these requirements is equivalent to showing that a product of polynomials is non-zero. One can see that each of these polynomials individually is a non-zero polynomial. The Schwartz-Zippel lemma ensures that there exist values of $\left\{\boldsymbol{\psi}_\ell \right\}_{\ell \in [n]}$ and $\left\{\eta_{\ell,\p} \right\}_{\ell \in [n], \p \in [\np]}$ satisfying all the desired conditions when the size $q$ of the underlying finite field $\mathbb{F}_q$ is large enough. Finally, for every $\ell \in [n]$, node $\ell$ stores the data
\[
\left\{ \boldsymbol{\psi}_\ell^T M_\p \right\}_{\p \in [\np]}~.
\]
Condition (a) will help in recovery of the entire message from any $k$ of the nodes, and condition (b) will help in performing the oblivious updates.




\subsection{Oblivious Update Algorithm and Performance}
\begin{theorem}\label{thm:achievable_MBR}
In the code constructed in Section~\ref{sec:MBR_encoding}, any \stale node can perform an oblivious update by downloading one symbol each from {any} two updated nodes when at most one symbol has changed.
\end{theorem}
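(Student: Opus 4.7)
The plan is to generalize the update algorithm illustrated in Section~\ref{sec:example}. I would have the \stale node $\st$ pick any two \helper nodes $\up_1,\up_2$ and request from each $\up \in \{\up_1,\up_2\}$ the single scalar
\[
z_\up \;:=\; \sum_{\p \in [\np]} \etas{\up}{\p}\, \boldsymbol{\psi}_\up^T M_\p^{\prime}\,\boldsymbol{\psi}_\st,
\]
where $M_\p^{\prime}$ denotes the post-update matrix on stripe $\p$. Because $\boldsymbol{\psi}_\up^T M_\p^{\prime}$ is exactly what node $\up$ stores and $(\etas{\up}{\p},\boldsymbol{\psi}_\st)$ are publicly known, the helper can produce $z_\up$ by a single $\mathbb{F}_q$-linear combination of its stored symbols. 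The total download is then $2\log_2 q$ bits, matching Theorem~\ref{thm:lower_any}.

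Next, I would exploit the symmetry of $M_\p$ (giving $\boldsymbol{\psi}_\up^T M_\p \boldsymbol{\psi}_\st = \boldsymbol{\psi}_\st^T M_\p \boldsymbol{\psi}_\up$) so that the stale node can compute the pre-update counterpart $\tilde z_\up := \sum_\p \etas{\up}{\p}\, \boldsymbol{\psi}_\st^T M_\p \boldsymbol{\psi}_\up$ directly from its own stored data $\{\boldsymbol{\psi}_\st^T M_\p\}_\p$. The residual $y_\up := z_\up-\tilde z_\up$ then isolates the update. Since at most one message symbol was modified, $M_\p^{\prime}-M_\p$ is nonzero for at most one index $\p_0$, and that difference is an additive scalar $\sdelta\in\mathbb{F}_q$ placed at some position $(i,j)$ (and symmetrically at $(j,i)$ when $i\neq j$). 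A direct expansion against~\eqref{eq:gamma} then yields
\[
y_{\up_1} \;=\; \sdelta\,\etas{\up_1}{\p_0}\,\vpsi{\up_1}{\st}{i}{j},
\qquad
y_{\up_2} \;=\; \sdelta\,\etas{\up_2}{\p_0}\,\vpsi{\up_2}{\st}{i}{j}.
\]

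If $y_{\up_1}=y_{\up_2}=0$, the stale node concludes $\sdelta=0$ and halts. Otherwise, I would appeal to condition~(b) of Section~\ref{sec:MBR_encoding} to argue that $(y_{\up_1},y_{\up_2})$ uniquely pins down the triple $(\p_0,i,j)$. Concretely, if some alternative triple $(\p_0',i',j')\neq(\p_0,i,j)$ together with a scalar $\sdelta'$ produced the same pair, then eliminating $\sdelta$ and $\sdelta'$ by cross-multiplying the two identities for $y_{\up_1}$ and $y_{\up_2}$ would give exactly the equality
$\etas{\up_1}{\p_0}\vpsi{\up_1}{\st}{i}{j}\,\etas{\up_2}{\p_0'}\vpsi{\up_2}{\st}{i'}{j'}
=\etas{\up_1}{\p_0'}\vpsi{\up_1}{\st}{i'}{j'}\,\etas{\up_2}{\p_0}\vpsi{\up_2}{\st}{i}{j}$,
which is forbidden by (b). Having identified $(\p_0,i,j)$, the stale node recovers $\sdelta = y_{\up_1}/\bigl(\etas{\up_1}{\p_0}\vpsi{\up_1}{\st}{i}{j}\bigr)$ and completes the update by adding $\sdelta$ into the appropriate entries of its $\p_0\supth$ stored vector $\boldsymbol{\psi}_\st^T M_{\p_0}$, using only entries of $\boldsymbol{\psi}_\st$ to compute the correction.

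The main obstacle is the zero-case bookkeeping: to make the division in the final step legitimate (and the ratio argument above well-defined), one must show that whenever $\sdelta\neq 0$ the factor $\etas{\up}{\p_0}\vpsi{\up}{\st}{i}{j}$ is itself nonzero for $\up\in\{\up_1,\up_2\}$. I expect this to fall out of condition~(b) itself: if some $\vpsi{\up_1}{\st}{i}{j}$ or $\etas{\up_1}{\p_0}$ vanished, then choosing a secondary triple $(\p',i',j')$ that makes the right-hand side of~(b) zero would force the left-hand side to be nonzero, yielding the desired non-vanishing. The only subtlety is the case split in~\eqref{eq:gamma} for $i=j$ versus $i\neq j$, which simply modifies the form of $\vpsi{\up}{\st}{i}{j}$ and does not affect the structure of the identification argument.
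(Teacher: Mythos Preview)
Your proposal is correct and mirrors the paper's own argument almost exactly: the paper's proof simply points to Algorithm~\ref{algo:comm_MBR}, which has each helper $\up_i$ send the single scalar $\sum_{\p}\etas{\up_i}{\p}\boldsymbol{\psi}_{\up_i}^T M_\p'\boldsymbol{\psi}_\st$, has the stale node subtract the analogous quantity computed from its own data (implicitly using the symmetry $\boldsymbol{\psi}_\up^T M_\p\boldsymbol{\psi}_\st=\boldsymbol{\psi}_\st^T M_\p\boldsymbol{\psi}_\up$ that you make explicit), identifies $(\p_0,i_0,j_0)$ from the ratio $d_1{:}d_2$ via condition~(b), and inverts to recover $\sdelta$. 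Your cross-multiplication uniqueness argument and your handling of the non-vanishing of $\etas{\up_1}{\p_0}\vpsi{\up_1}{\st}{i}{j}$ via condition~(b) are exactly what the paper invokes (the paper states only that ``Step~6 employs condition~(b) \ldots\ which guarantees $\etas{\up_1}{\p}\vpsi{\up_1}{\st}{i}{j}\neq 0$''), so you are in fact slightly more explicit than the paper on both points.
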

\begin{IEEEproof}
Let $\{M_\p\}_{\p \in [\np]}$ be the matrices comprising the \stale message, as constructed in Section~\ref{sec:MBR_encoding}. The construction is such that no two matrices in $\{M_\p\}_{\p \in [\np]}$ have any element in common. As a result, the update of a single element causes a change in only one of these matrices. Let $\{M_\p'\}_{\p \in [\np]}$ be the matrices comprising the updated message.
Algorithm~\ref{algo:comm_MBR} updates the data of a stale node by connecting to any two updated nodes and downloading only one symbol from each. Recall that the notation $\vpsi{\cdot}{\cdot}{\cdot}{\cdot}$ used Step 2 onwards is defined in~\eqref{eq:gamma}. Step 6  employs condition (b) of the encoding which guarantees
${\etas{\up_1}{\p}\vpsi{\up_1}{\st}{i}{j}} \neq 0$.
\end{IEEEproof}

\begin{algorithm}[h]
\begin{itemize}
\item[]
\vspace{.2cm}
\hspace{-.85cm} \Stale node $\st$ contacts \textit{any} two updated nodes $\up_1$ and $\up_2$
\vspace{.2cm}
\item[\bf Updated Nodes:]
Node $\up_i$ ($i \in \{1,2\})$, which stores updated data $\{\boldsymbol\psi_{\up_i}^T M'_\p\}_{\p \in [\np]}$, returns the single symbol
$\sum_{\p=1}^\np \etas{\up_i}{\p} \boldsymbol\psi_{\up_i}^T M'_\p \boldsymbol\psi_\st$
\vspace{.3cm}
\item[\bf Stale Node:]
Stale node $\st$, which stores stale data $\left\{\boldsymbol\psi_\st^T M_\p\right\}_{\p \in [\np]}$, receives the two symbols
\begin{alignat*}{2}
r_1' := \sum_{\p=1}^\np \etas{\up_1}{\p} \boldsymbol\psi_{\up_1}^T M'_\p \boldsymbol\psi_\st~,
~~~ &&
r_2' := \sum_{\p=1}^\np \etas{\up_2}{\p} \boldsymbol\psi_{\up_2}^T M'_\p \boldsymbol\psi_\st.
\end{alignat*}
It performs the following operations.
\begin{enumerate}
\item From its stale data, compute:
\begin{alignat*}{2}
\!\!\!\!\!\! r_1 := \sum_{\p=1}^\np \etas{\up_1}{\p} \boldsymbol\psi_{\up_1}^T M_\p \boldsymbol\psi_\st~,
~~~ &&
r_2 := \sum_{\p=1}^\np \etas{\up_2}{\p} \boldsymbol\psi_{\up_2}^T M_\p \boldsymbol\psi_\st
\end{alignat*}

\item Subtract these from the received symbols to get
$d_1 := r_1' - r_1$ and
$d_2 := r_2' - r_2$
If the changed symbol is at location $(i, j)$ of matrix $M_\p$, and its value
has been changed by $\sdelta$, then
$d_1 = \etas{\up_1}{\p}\vpsi{\up_1}{\st}{i}{j} \sdelta$ and
$d_2 = \etas{\up_2}{\p}\vpsi{\up_2}{\st}{i}{j} \sdelta$

\item If $d_1 = d_2 = 0$ then the stale node already has the updated data; exit

\item Compute the ratio
$d_1:d_2 =
{\etas{\up_1}{\p}\vpsi{\up_1}{\st}{i}{j}}:{\etas{\up_2}{\p}\vpsi{\up_2}{\st}{i}{j}}$

\item Condition (b) ensures that this ratio is different for different $(\p, i, j)$, so use the ratio to
identify changed location $(i_0, j_0)$ and $\p_0$.
\item Compute $\sdelta =  (\etas{\up_1}{\p_0} \vpsi{\up_1}{\st}{i_0}{j_0})^{-1} d_1$
\item Construct an $(\dparameter \times \dparameter)$ matrix $\Delta$
with value $\sdelta$ at locations $(i_0, j_0)$ and $(j_0, i_0)$ and zeros elsewhere;
in the stale node, update data $\boldsymbol\psi_\st^T M_{\p_0} $ to $\boldsymbol\psi_\st^T M'_{\p_0}$ as
$\boldsymbol\psi_\st^T M'_{\p_0} = \boldsymbol\psi_\st^T M_{\p_0} +
\boldsymbol\psi_\st^T \Delta$
\end{enumerate}
\end{itemize}
\caption{Optimal Oblivious Update}\label{algo:comm_MBR}
\end{algorithm}

\begin{theorem}
In the code constructed in Section~\ref{sec:MBR_encoding}, the message can be recovered from the data stored in \textit{any} $k$ nodes. Furthermore, the code is optimal with respect to the storage-bandwidth tradeoff of~\cite{dimakis2010network_etal}.
\end{theorem}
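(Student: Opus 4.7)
The plan is to prove the two assertions separately, each by invoking the product-matrix framework of \cite{rashmi2011productmatrix_supershort}.

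For the reconstruction claim, I would fix any $k$ contacted nodes $\ell_1,\ldots,\ell_k$ and any stripe index $\p \in [\np]$, recover $M_\p$ from the collected data, and then repeat the argument independently across $\p$. Partition each helper vector as $\boldsymbol{\psi}_\ell = [\boldsymbol{\phi}_\ell^T \; \boldsymbol{\mu}_\ell^T]^T$ into its top $k$ and bottom $(n-1-k)$ coordinates, and write $M_\p$ in the corresponding block form with symmetric $k \times k$ block $S_\p$, rectangular $k \times (n-1-k)$ block $T_\p$, and zero bottom-right block. Stacking the $k$ contacted rows yields the $k \times (n-1)$ product $\Psi^T M_\p$, whose right $k \times (n-1-k)$ block equals $\Phi^T T_\p$, where $\Phi = [\boldsymbol{\phi}_{\ell_1} \cdots \boldsymbol{\phi}_{\ell_k}]$. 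By condition (a) of the encoding, $\Phi$ is an invertible $k \times k$ submatrix of $[\boldsymbol{\psi}_1 \cdots \boldsymbol{\psi}_n]$, so inverting $\Phi^T$ recovers $T_\p$. Substituting back into the left $k \times k$ block of $\Psi^T M_\p$ and inverting $\Phi^T$ a second time recovers $S_\p$, and hence all of $M_\p$.

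For the optimality claim, I would match the code's storage and repair bandwidth to the Minimum Bandwidth Regenerating (MBR) corner of the tradeoff in \cite{dimakis2010network_etal} with $d = n-1$ helpers. The per-node storage $\nodesize = (n-1)\np = B(n-1)/(k(n-1)-k(k-1)/2)$ equals the MBR $\alpha$ exactly, by the definition of $\np$. To hit the MBR $\beta$, I would exhibit a bandwidth-optimal repair scheme: when node $\st$ fails, each surviving node $\up$ transmits, for every stripe $\p$, the single scalar $\boldsymbol{\psi}_\up^T M_\p \boldsymbol{\psi}_\st$. Stacking the $n-1$ responses yields $\Psi^T M_\p \boldsymbol{\psi}_\st$, where $\Psi$ is now the full $(n-1)\times(n-1)$ matrix of surviving helper vectors, which is invertible by condition (a); hence $M_\p \boldsymbol{\psi}_\st$ is recovered, and by symmetry of $M_\p$ its transpose is exactly the required row $\boldsymbol{\psi}_\st^T M_\p$. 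The download is one symbol per helper per stripe, saturating the MBR vertex.

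I do not anticipate a serious obstacle, since the entire argument is block-matrix bookkeeping. The one slightly delicate point is to make sure that condition (a)---phrased as every submatrix of $[\boldsymbol{\psi}_1 \cdots \boldsymbol{\psi}_n]$ being of full rank---indeed supplies invertibility of both the $k \times k$ submatrices used in the reconstruction step and the $(n-1)\times(n-1)$ submatrices used in the repair step; this is immediate from the wording but should be highlighted so that the reader can see exactly where the generic-position hypothesis is consumed.
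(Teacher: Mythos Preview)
Your proposal is correct and follows the same approach as the paper: both rely on the product-matrix MBR framework of \cite{rashmi2011productmatrix_supershort}, with the paper simply citing that reference while you spell out the reconstruction and repair arguments it contains. Your block-matrix bookkeeping and the use of condition~(a) for the required invertibilities are accurate, so there is nothing to add beyond noting that the paper's own proof is the one-line citation you are unpacking.
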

\begin{IEEEproof}
The code falls under the `product-matrix MBR' framework of~\cite[Section IV]{rashmi2011productmatrix_supershort} from which it derives these properties.
\end{IEEEproof}

\section{Lower Bounds for Linear MDS Codes}\label{sec:lower_mds}
In this section, we consider the class of codes that are `Maximum-Distance-Separable (MDS)' (recall definition from the last paragraph of Section~\ref{sec:model}). We provide lower bounds on the amount of download for arbitrary MDS codes with linear encoding. Although we consider the encoding to be linear, we allow the update operation to be executed via any arbitrary (linear or non-linear) functions.

\begin{theorem}\label{thm:lower_MDS}
Under any MDS code with linear encoding, a \stale node must contact at least $k$ updated nodes. Upon contacting $k$ nodes, the \stale node must download at least $2\log_2 q$ bits from each them.
\end{theorem}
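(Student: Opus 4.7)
The plan is to handle the two parts separately, in both cases by exhibiting two valid scenarios that the stale node cannot distinguish but which demand different updated stale data. Both constructions rest on a dimension count from the MDS property: for any set $S$ of at most $k-1$ contacted helpers with combined generator $V$, the matrix $[G_s;V]$ has full row rank $(|S|+1)A$, so $\text{null}(V)$ has dimension $B-|S|A$ and $G_s|_{\text{null}(V)}$ surjects onto $\mathbb{F}_q^A$; when $|S|=k-1$ this restriction is a bijection, and the same bijection property applies to $G_{u_i}|_{\text{null}(V)}$ for any node $u_i$ not in $S$.

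For Part 1, suppose the stale node contacts only $t\le k-1$ helpers with combined generator $U$. Pick $j\in[B]$ with $G_s e_j\ne 0$ (which exists since $G_s$ has rank $A\ge 2$) and any nonzero $\delta\in\mathbb{F}_q$; surjectivity gives some $y\in\text{null}(U)$ with $G_s y=\delta G_s e_j$. Take scenario A to be the single-symbol update $(m_A,m'_A)=(0,\delta e_j)$ and scenario B to be the no-update pair $(m_B,m'_B)=(\delta e_j-y,\;\delta e_j-y)$. Then $G_s m_A=G_s m_B=0$ (stale data matches) and $Um'_A=Um'_B=\delta Ue_j$ (the helpers see the same stored vectors and must therefore return identical responses), yet the correct updates disagree: $G_s m'_A=\delta G_s e_j\ne 0=G_s m'_B$. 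This contradicts correctness of any protocol using only $t$ helpers.

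For Part 2, fix a contacted helper $u_i$, let $U'$ be the combined generator of the other $k-1$, and suppose for contradiction that the response function $f_i$ has range strictly smaller than $q^2$. I strengthen the stale node by giving it the other helpers' full stored data $U'm'$, since more side information at the stale node can only weaken the lower bound on $f_i$. Under this setup, two scenarios $(m_a,m'_a),(m_b,m'_b)$ are indistinguishable iff $G_s m_a=G_s m_b$, $U'm'_a=U'm'_b$, and $f_i(G_{u_i}m'_a)=f_i(G_{u_i}m'_b)$. Let $y:=m'_a-m'_b\in\text{null}(U')$; the MDS bijections from the opening paragraph make $y\ne 0$ and $G_s y\ne 0$ equivalent, so any nontrivial indistinguishable pair already satisfies the ``different required updates'' requirement.

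It remains to realize such a pair from a collision of $f_i$, in Theorem~1 style. A collision $f_i(z_1)=f_i(z_2)$ with $z_1\ne z_2$ yields $w:=z_1-z_2$ and, via the bijection $G_{u_i}|_{\text{null}(U')}$, a unique $y\in\text{null}(U')$ with $G_{u_i}y=w$; then one chooses update coordinates $(j_a,\delta_a)$ and $(j_b,\delta_b)$ fulfilling the linear identity $G_s y=\delta_a G_s e_{j_a}-\delta_b G_s e_{j_b}$ forced by $G_s m_a=G_s m_b$. The main obstacle is that the single-symbol-change constraint restricts which values of $G_s y$ are admissible: $G_s y$ must lie in the union of two-column spans of $G_s$. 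Since $G_s$ has rank $A\ge 2$ it has two linearly independent columns, and the invertible composition $D:=G_{u_i}\circ(G_s|_{\text{null}(U')})^{-1}$ on $\mathbb{F}_q^A$ translates $f_i$-collisions into $G_s y$-differences; choosing the $A(q-1)+1$ candidates parametrized by $(j,\delta)$ with $G_s e_j$ independent and invoking the pigeonhole over the $q^2$-sized family of candidate $m'$s parametrized by two free symbols (as in Theorem~1) forces $|\text{range}(f_i)|\ge q^2$, completing the contradiction.
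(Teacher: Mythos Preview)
Your Part 1 argument is correct and close in spirit to the paper's (the paper phrases it as ``there is a unique message consistent with the stale node's data and the $k-1$ helpers' data, and the no-update scenario for that message is indistinguishable from the actual one''; your explicit construction with $y\in\mathrm{null}(U)$ is a clean instantiation of the same idea).

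Part 2 has the right skeleton, and once made precise it is essentially the paper's argument, but your final paragraph does not yet constitute a proof. The logical direction you wrote is backwards: you start from an \emph{arbitrary} collision $f_i(z_1)=f_i(z_2)$, pull back to $y\in\mathrm{null}(U')$, and then hope that $G_s y$ lies in the span of two columns of $G_s$. For $A\ge 3$ this is a thin set, so a generic collision will not land there, and the remark about ``$A(q-1)+1$ candidates'' does not repair this. What is needed is to choose the $q^2$ family \emph{first} so that every pair already satisfies the two-column constraint, and only then pigeonhole $f_i$. Concretely: pick $j_1,j_2$ with $G_s e_{j_1},G_s e_{j_2}$ linearly independent, and for each $(\alpha,\beta)\in\mathbb{F}_q^2$ let $m'_{\alpha,\beta}$ be the unique element of $\mathrm{null}(U')$ with $G_s m'_{\alpha,\beta}=\alpha G_s e_{j_1}+\beta G_s e_{j_2}$. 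These $q^2$ messages have distinct $G_{u_i}$-images (by the bijection $G_{u_i}|_{\mathrm{null}(U')}$), so $|\mathrm{range}(f_i)|<q^2$ forces a collision $f_i(G_{u_i}m'_{\alpha_1,\beta_1})=f_i(G_{u_i}m'_{\alpha_2,\beta_2})$; then $m_a=m'_{\alpha_1,\beta_1}-(\alpha_1-\alpha_2)e_{j_1}$ and $m_b=m'_{\alpha_2,\beta_2}-(\beta_2-\beta_1)e_{j_2}$ give valid single-symbol scenarios with $G_s m_a=G_s m_b$ and $G_s m'_a\neq G_s m'_b$.

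This is exactly what the paper does, in slightly different packaging: it takes the $q^A$ messages $[m_1',\dots,m_A',0,\dots,0]$, transforms each to the unique $m''\in\mathrm{null}(U')$ with the same $G_s$-value, pigeonholes on $f$ to get a fiber of size $>q^{A-2}$, and then pigeonholes again on the first $A-2$ coordinates to obtain two messages differing only in coordinates $A-1,A$. Fixing those first $A-2$ coordinates to zero from the outset collapses the double pigeonhole to the single one above. So your strategy is sound; the gap is that the last paragraph must be rewritten to build the family before invoking the collision, rather than after.
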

\begin{IEEEproof}
\newcommand{\setS}{\mathcal{S}}
We will first show that an oblivious update cannot be performed by contacting just $(k-1)$ nodes. The proof is by contradiction for which we will assume existence of some $(k-1)$ nodes from which some stale node can be updated. Suppose the entire data stored in these $(k-1)$ nodes is made available to the stale node. Since the code is MDS, there exists exactly one message whose encoding equals the data currently stored in these $(k-1)$ \helper nodes and the \stale node. The stale node will thus be unable to distinguish between the two cases: (a) this message as the \stale message and no update, and (b) the actual \stale and updated messages. The updated data at the \stale node must be different in the two cases, thus necessitating it to distinguish the two cases. This yields a contradiction.

Now assume the \stale node connects to some $k$ nodes. We now show that it must download at least $2\log_2 q$ bits from each of these $k$ nodes. It suffices to show that the last of these $k$ \helper nodes must pass $2\log_2 q$ bits, since any of these $k$ nodes may be defined as the last node. To this end, consider a genie who provides the entire data stored in the first $(k-1)$ \helper nodes to the \stale node, and furthermore, provides the entire updated message to the last \helper node.

Let $\msgvec\in (\mathbb{F}_q)^B$ be the \stale message, and $\msgvec' \in (\mathbb{F}_q)^B$ be the modified message (with $d_H(\msgvec,\msgvec') \leq 1)$. Let $\genstale$ denote the $(\nodesize \times B)$ generator matrix of the \stale node, i.e., the \stale node stores $\genstale \boldsymbol{m}$ under message $\boldsymbol{m}$. Assume without loss of generality that the $\nodesize$ rows of $\genstale$ are linearly independent. Upon being contacted by the \stale node, the last \helper node (to whom the genie has provided all the updated data) must send some function of the data: let $\func$ denote this function, i.e., the \helper node returns $\func(\msgvec')$ to the \stale node. We will now show that the range of $\func$ must contain at least $q^2$ elements, thus necessitating a download of at least $2\log_2 q$ bits.

The linear independence of the $\nodesize$ rows of $\genstale$ implies existence of $\nodesize$ coordinates $\ell_1,\ldots,\ell_\nodesize \in [B]$ of $\msgvec$ such that for every fixed value of $\msgvec \backslash \{\msg_{\ell_1},\ldots,\msg_{\ell_\nodesize}\}$, the map $(\msg_{\ell_1},\ldots,\msg_{\ell_\nodesize}) \rightarrow \genstale \msgvec$ is a bijection. Without loss of generality, let $\ell_i=i~\forall i \in [\nodesize]$.



Let $\setS'$ denote the set of all $q^\nodesize$ messages of the form
$[\msg_1',\ldots,\msg_\nodesize',0\ldots,0]$. Construct a second set $\setS''$
from $\setS'$ in the following manner. For each $\msgvec' \in \setS'$, find the
unique vector $\msgvec'' \in (\mathbb{F}_q)^B$ such that $\genstale \msgvec'' =
\genstale \msgvec'$ and the encoding of $\msgvec''$ in the first $(k-1)$
\helper nodes is zero. Since the code is MDS, for each $\msgvec'$, there exists exactly one such $\msgvec''$. Set $\setS''$ as the collection of these vectors $\msgvec''$.

Partition the set $\setS''$, of size $q^\nodesize$, into sets that map onto identical values in the
range of $\func$. Since the range of $\func$ has a cardinality strictly smaller than $q^2$,
at least one of these sets must have a cardinality strictly greater than $q^{\nodesize - 2}$.
Let us call this set $\mathcal{R}''$.

Now consider the original elements $\mathcal{R}' \subseteq \setS'$ which were transformed into $\mathcal{R}''$. In this set $\mathcal{R}'$, of size greater than $q^{\nodesize - 2}$, there must exist some two messages $\msgveca'$ and $\msgvecb'$ which match on the first $(\nodesize-2)$ coordinates.
It follows that there exists $\msgvecc' \in (\mathbb{F}_q)^B$ such that $d_H(\msgveca',\msgvecc') \leq 1$ and $d_H(\msgvecb',\msgvecc') \leq 1$. Next, let $\msgveca''$ and $\msgvecb''$ respectively be the (distinct) constituents of $\mathcal{R}''$ that are derived from $\msgveca'$ and $\msgvecb'$ respectively.

Finally, consider the following scenario. Suppose the original message was
$(\msgveca''-\msgveca'+\msgvecc')$, and this was updated to $\msgveca''$. This
constitutes the update of at most one symbol since
$d_H(\msgveca''-\msgveca'+\msgvecc',\msgveca'')=d_H(\msgveca',\msgvecc') \leq
1$. We claim that this scenario is indistinguishable from the scenario of the
original message being $(\msgvecb''-\msgvecb'+\msgvecc')$ and the updated message being
$\msgvecb''$. To this end, first observe that the latter situation also
constitutes the update of at most one symbol since
$d_H(\msgvecb''-\msgvecb'+\msgvecc',\msgvecb'')=d_H(\msgvecb',\msgvecc') \leq
1$. Furthermore, since $\genstale\msgveca'=\genstale\msgveca''$ and $\genstale
\msgvecb'=\genstale\msgvecb''$, it must be that the encoding
$\genstale(\msgveca''-\msgveca'+\msgvecc')$ of
$(\msgvecb''-\msgvecb'+\msgvecc')$ at the \stale node is identical to the
encoding $\genstale(\msgvecb''-\msgvecb'+\msgvecc')$ of
$(\msgvecb''-\msgvecb'+\msgvecc')$ in the \stale node. The data stored in the
stale node thus provides no information pertaining to distinguishing these two
scenarios. As discussed above, the encoding of $\msgveca''$ and $\msgvecb''$
both result in zeros at the first $(k-1)$ helper nodes. Furthermore,
$\msgveca'', \msgvecb'' \in \mathcal{R}'' \implies \func(\msgveca'')=\func(\msgvecb'')$ which makes the data downloaded from the last \helper node identical in the two cases. An accurate update is thus impossible in this situation, thus proving our claim.
\end{IEEEproof}

\section{MDS Codes Achieving Lower Bounds}\label{sec:achieve_mds}
In this section, we present upper bounds on the amount of download required for oblivious updates under MDS codes, that meet the lower bounds established in Theorem~\ref{thm:lower_MDS}.  

\subsection{Encoding}\label{sec:msr_encoding}
Each node has a storage capacity of $\nodesize := \frac{B}{k}$ symbols. Let $\boldsymbol{m}$ be a $B$-length vector consisting of the $B$ message symbols. Let $\genmsr$ be an arbitrary $(n\nodesize \times B)$ matrix with the property that every submatrix of $\genmsr$ is of full rank. For instance, one can choose $\genmsr$ as a Cauchy matrix.
Construct $n$ matrices $\{\genmsr_\ell \}_{\ell \in [n]}$, each of size $(\nodesize \times B)$,
by partitioning $\genmsr$ into $n$ blocks of $\nodesize$ rows each.
Finally, for every $\ell \in [n]$, node $\ell$ stores the data
\[
\genmsr_\ell \boldsymbol{m}~.
\]
\subsection{Oblivious Update Algorithm and Performance}
\begin{theorem}\label{thm:achievable_MDS}
In the code constructed in Section~\ref{sec:msr_encoding}, any \stale node can perform an oblivious update by downloading $2\log_2 q$ bits each from {any} $k$ \helper nodes when at most one symbol has changed.
\end{theorem}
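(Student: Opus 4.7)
The plan is to give an explicit protocol by which the stale node recovers the single changed pair $(j,\sdelta)\in[B]\times\mathbb{F}_q$ (so that $\msgvec'-\msgvec=\sdelta\,e_j$) from two downloaded symbols per helper, and then forms $\genmsr_\st\msgvec'=\genmsr_\st\msgvec+\sdelta\,\genmsr_\st e_j$ locally. The key algebraic ingredient is an MDS-style repair identity: since every submatrix of $\genmsr$ is of full rank, for any choice of $k$ helpers $\up_1,\ldots,\up_k$ the stacked matrix $\Omega:=[\genmsr_{\up_1}^T\,|\,\cdots\,|\,\genmsr_{\up_k}^T]^T$ is $B\times B$ and invertible. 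Define $L:=\genmsr_\st\,\Omega^{-1}$ and partition it as $L=[L_1\,|\,\cdots\,|\,L_k]$ with $L_i\in\mathbb{F}_q^{\nodesize\times\nodesize}$; then $\genmsr_\st=\sum_{i=1}^k L_i\,\genmsr_{\up_i}$, so every entry of $\genmsr_\st\msgvec'$ is a prescribed linear combination of entries of the updated helpers' data.

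Fix two distinct row indices $p_1,p_2\in[\nodesize]$ and let $L_i^{(t)}\in\mathbb{F}_q^{1\times\nodesize}$ denote the $p_t$-th row of $L_i$. The protocol asks helper $\up_i$ for the two symbols $y_{i,t}:=L_i^{(t)}\,\genmsr_{\up_i}\msgvec'$ with $t\in\{1,2\}$, using exactly $2\log_2 q$ bits per helper. The stale node then sums across helpers to obtain $Y_t:=\sum_{i=1}^k y_{i,t}=L^{(t)}\Omega\msgvec'=(\genmsr_\st\msgvec')_{p_t}$ and subtracts the corresponding symbol of its own stored $\genmsr_\st\msgvec$, yielding $d_t=\sdelta\,(\genmsr_\st)_{p_t,j}$ for $t\in\{1,2\}$.

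If $d_1=d_2=0$ then $\sdelta=0$ and the stored data is already current (every entry of $\genmsr$ is nonzero, since its $1\times 1$ submatrices are of full rank). Otherwise, the ratio $d_1:d_2=(\genmsr_\st)_{p_1,j}:(\genmsr_\st)_{p_2,j}$ identifies $j$ uniquely, because a collision between two distinct columns $j\neq j'$ on rows $p_1,p_2$ would force the corresponding $2\times 2$ submatrix of $\genmsr$ to be singular. Given $j$, one recovers $\sdelta=d_1/(\genmsr_\st)_{p_1,j}$ and updates the stored data via $\genmsr_\st\msgvec'=\genmsr_\st\msgvec+\sdelta\,\genmsr_\st e_j$.

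The only nontrivial observation is the telescoping $\sum_i L_i^{(t)}\genmsr_{\up_i}\msgvec'=(\genmsr_\st\msgvec')_{p_t}$: no individual helper's row space meets that of $\genmsr_\st$ nontrivially, so no single helper's download is directly comparable to any stale symbol stored locally, yet the specific linear combination across all $k$ helpers dictated by the repair identity $\genmsr_\st=\sum_i L_i\genmsr_{\up_i}$ collapses to exactly one target entry of $\genmsr_\st\msgvec'$. Once this is in place, the decoding of $(j,\sdelta)$ is immediate from the projective distinguishability of columns guaranteed by the ``every submatrix full rank'' property of $\genmsr$.
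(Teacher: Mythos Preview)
Your proposal is correct and follows essentially the same approach as the paper's proof: both invert the stacked helper matrix $\Omega$, use the resulting coefficients to have each helper return two linear combinations that telescope to two entries of $\genmsr_\st\msgvec'$, then subtract the stale symbols and decode $(j,\sdelta)$ via the ratio argument based on nonsingularity of $2\times2$ submatrices. The only cosmetic difference is that you allow arbitrary row indices $p_1,p_2$ whereas the paper fixes the first two rows of $\genmsr_\st$.
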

\begin{IEEEproof}
Let $\msgvec \in (\mathbb{F}_q)^B$ be the \stale message and let $\msgvec' \in (\mathbb{F}_q)^B$ be
the updated message, with $d_H(\msgvec,\msgvec')\leq 1$.
For every $\ell \in [n]$, let $\genmsrs1_\ell$ and $\genmsrs2_\ell$ be the first and second rows of
$\genmsr_\ell$, respectively. Further, for any $m \in \{1,2\}$ and any $j \in [B]$, let $(\genmsr^{(m)}_\ell)_{j}$ denote the $j^{\rm th}$ element of $\genmsr^{(m)}_\ell$.

Algorithm~\ref{algo:comm_MSR} updates the data of a stale node by connecting to any $k$ updated nodes and downloading exactly two symbols from each.
Steps 5 and 6 of Algorithm~\ref{algo:comm_MSR} employ the fact that every submatrix of $\genmsr$ is of full rank.
\end{IEEEproof}

\begin{algorithm}[h]
\begin{itemize}
\item[]
\vspace{.2cm}
\hspace{-.85cm} \Stale node $\st$ contacts \textit{any} $k$ updated nodes
$u_1,\hdots,u_k$.
\vspace{.2cm}
\item[\bf Updated Nodes:]
For $i \in \{1,2\}$, define $\nodesize$-length vectors $\left\{\xis{\ell}{i}\right\}_{\ell \in [k]}$ as
\[
\begin{bmatrix}
\xis{1}{i}^T & \cdots & \xis{k}{i}^T \end{bmatrix}
:=\genmsrs{i}_s
\begin{bmatrix}
\genmsr_{\up_1}\\
\vdots\\
\genmsr_{\up_k}
\end{bmatrix}^{-1}
\]

Updated node $u_\ell$ ($\ell \in [k]$), which stores the updated data $\genmsr_{\up_\ell} \boldsymbol{m}'$, returns the two symbols:
\[
\xis{\ell}{1}^T \genmsr_{\up_\ell} \boldsymbol{m}' \qquad \textrm{and} \qquad \xis{\ell}{2}^T\genmsr_{\up_\ell} \boldsymbol{m}'
\]


%
\item[\bf Stale Node:]
Stale node $\st$, which stores stale data $\genmsr_s \boldsymbol{m}$,
performs the following operations.

\begin{enumerate}
\item
From the set of $2k$ received symbols, compute
$\sum_{\ell=1}^k \xis{\ell}{1}^T \genmsr_{\up_\ell} \boldsymbol{m}'
= \genmsrs1_s \boldsymbol{m}'$
and
$\sum_{\ell=1}^k \xis{\ell}{2}^T \genmsr_{\up_\ell} \boldsymbol{m}'
= \genmsrs2_s \boldsymbol{m}'$


\item Given the stale stored data, containing
$\genmsrs1_s \boldsymbol{m}$
and $\genmsrs2_s \boldsymbol{m}$,
take differences to obtain
$d_1 := \genmsrs1_s (\boldsymbol{m}' - \boldsymbol{m})$
and
$d_2 := \genmsrs2_s (\boldsymbol{m}' - \boldsymbol{m})$

If the changed symbol is at location $j$ in the message vector,
and its value has been changed by $\sdelta$, then
$d_1 = (\genmsrs1_s)_j \sdelta$
and
$d_2 = (\genmsrs2_s)_j \sdelta$

\item If $d_1 = d_2 = 0$ then the stale node already has the updated data; exit

\item Compute the ratio
$d_1 : d_2 =
 (\genmsrs1_s)_j :  (\genmsrs2_s)_j$

\item By construction, this ratio is unique for different values of $j$, so use the ratio to
identify the location $j_0$ of the change.

\item Compute $\sdelta =  ((\genmsrs1_s)_{j_0})^{-1} d_1$

\item Construct a $B$-length vector $\boldsymbol{\delta}$
with value $\sdelta$ at location $j_0$ and zeros elsewhere;
update the stale data by computing
$\genmsr_\st \boldsymbol{m}' = \genmsr_\st \boldsymbol{m} +
\genmsr_\st \boldsymbol{\delta}$

\end{enumerate}
\end{itemize}
\caption{Optimal Oblivious Update in an MDS Code}\label{algo:comm_MSR}
\end{algorithm}

\begin{theorem}
The code constructed in Section~\ref{sec:msr_encoding} is maximum-distance-separable (MDS).
\end{theorem}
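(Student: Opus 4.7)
The plan is to verify the two defining properties of an MDS code stated in Section~\ref{sec:model}: (a) the message of $B$ symbols is recoverable from the data of any $k$ of the $n$ nodes, and (b) the per-node storage equals $\nodesize = B/k$. Property (b) is built into the construction: the encoding in Section~\ref{sec:msr_encoding} was defined with $\nodesize := B/k$ and node $\ell$ stores $\genmsr_\ell \boldsymbol{m}$, a vector of length $\nodesize$. So only property (a) requires real argument.

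For property (a), I would fix an arbitrary size-$k$ subset $\{\ell_1,\ldots,\ell_k\} \subseteq [n]$ and consider the combined data from these nodes, namely the stacked vector $[\genmsr_{\ell_1}\boldsymbol{m};\,\cdots;\,\genmsr_{\ell_k}\boldsymbol{m}]$. This equals $G \boldsymbol{m}$ where
\[
G = \begin{bmatrix} \genmsr_{\ell_1}\\ \vdots \\ \genmsr_{\ell_k}\end{bmatrix}
\]
is obtained by selecting $k\nodesize = B$ rows of $\genmsr$. Thus $G$ is a $B \times B$ submatrix of the $(n\nodesize \times B)$ matrix $\genmsr$. By the design hypothesis on $\genmsr$ (every submatrix has full rank), $G$ has full rank $B$ and is therefore invertible. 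Consequently $\boldsymbol{m} = G^{-1} (G \boldsymbol{m})$ can be recovered from the data of those $k$ nodes. Since the subset was arbitrary, recovery from any $k$ nodes is possible.

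The only step that warrants any care is verifying that a matrix such as a Cauchy matrix indeed satisfies the ``every submatrix has full rank'' property used above, but Section~\ref{sec:msr_encoding} already points to Cauchy matrices as an explicit realization, and this is a classical fact; so I would simply invoke it. I do not anticipate any real obstacle, as the construction was tailored so that each of (a) and (b) reduces to a one-line check; the bulk of the proof is just unpacking definitions and applying the full-rank submatrix hypothesis to a $B \times B$ slice of $\genmsr$.
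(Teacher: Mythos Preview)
Your proposal is correct and follows essentially the same approach as the paper: the paper's proof is a one-liner noting that each node stores $B/k$ symbols and that the full-rank-submatrix property of $\genmsr$ yields recovery from any $k$ nodes, which is exactly what you spell out in detail.
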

\begin{proof}
Each node stores only $\frac{B}{k}$ symbols, and since every submatrix of $\genmsr$ is of full rank, the entire message is recoverable from any $k$ of the nodes.
\end{proof}

\section{Summary and Open Problems}

This paper considered the problem of \textit{oblivious updates} wherein the data stored in a storage node needs to be updated by downloading data from already updated nodes in the storage network, but with none of the nodes knowing the identity or the value of the modified data symbols. Oblivious updates allow the system to ensure that all nodes have the updated data (even after being offline/unavailable) without having to keep a log of modifications. We established the fundamental limits on the communication required for performing such oblivious updates, when a single message symbol is modified, by deriving genie-aided lower bounds and designing storage codes and update algorithms meeting these bounds. Our goal for the future is to extend the characterization of the fundamental limits in multiple directions, such as considering oblivious updates for multiple symbol modifications, non-linear codes, and interactive update protocols. 
In addition, to complement the theoretical standpoint of this paper, we also plan to investigate the questions that arise in practical implementations of oblivious update protocols, such as the design of explicit codes, and quantification of the minimal state that needs to be maintained for realizing the update algorithms.


\bibliographystyle{IEEEtran}
\bibliography{bibtex_updates}

\begin{thebibliography}{10}
\providecommand{\url}[1]{#1}
\csname url@samestyle\endcsname
\providecommand{\newblock}{\relax}
\providecommand{\bibinfo}[2]{#2}
\providecommand{\BIBentrySTDinterwordspacing}{\spaceskip=0pt\relax}
\providecommand{\BIBentryALTinterwordstretchfactor}{4}
\providecommand{\BIBentryALTinterwordspacing}{\spaceskip=\fontdimen2\font plus
\BIBentryALTinterwordstretchfactor\fontdimen3\font minus
  \fontdimen4\font\relax}
\providecommand{\BIBforeignlanguage}[2]{{%
\expandafter\ifx\csname l@#1\endcsname\relax
\typeout{** WARNING: IEEEtran.bst: No hyphenation pattern has been}%
\typeout{** loaded for the language `#1'. Using the pattern for}%
\typeout{** the default language instead.}%
\else
\language=\csname l@#1\endcsname
\fi
#2}}
\providecommand{\BIBdecl}{\relax}
\BIBdecl

\bibitem{totalRecall_etal}
R.~Bhagwan~{\it et al.}, ``Total recall: System support for automated
  availability management,'' in \emph{NSDI}, 2004.

\bibitem{rowstron2001storage_short}
A.~Rowstron and P.~Druschel, ``Storage management and caching in {PAST}, a
  large-scale, persistent peer-to-peer storage utility,'' in \emph{ACM SIGOPS},
  2001.

\bibitem{crashplan}
\BIBentryALTinterwordspacing
``Crashplan,'' 2014. [Online]. Available: \url{code42.com/crashplan}
\BIBentrySTDinterwordspacing

\bibitem{spacemonkey}
\BIBentryALTinterwordspacing
``Space monkey,'' 2014. [Online]. Available: \url{spacemonkey.com}
\BIBentrySTDinterwordspacing

\bibitem{kubiatowicz2000oceanstore_etal}
J.~Kubiatowicz~\textit{et al.}, ``Oceanstore: An architecture for global-scale
  persistent storage,'' \emph{ACM Sigplan Notices}, 2000.

\bibitem{cleversafe_erasure_codes_short}
\BIBentryALTinterwordspacing
``Cleversafe,'' 2014. [Online]. Available: \url{cleversafe.com}
\BIBentrySTDinterwordspacing

\bibitem{ghemawat2003google_short}
S.~Ghemawat, H.~Gobioff, and S.~Leung, ``The {Google} file system,'' in
  \emph{ACM SOSP}, 2003.

\bibitem{shvachko2010hadoop_short}
K.~Shvachko, H.~Kuang, S.~Radia, and R.~Chansler, ``The {H}adoop distributed
  file system,'' in \emph{Proc. IEEE MSST}, 2010.

\bibitem{tang2008benefit_short}
B.~Tang, H.~Gupta, and S.~Das, ``Benefit-based data caching in ad hoc
  networks,'' \emph{IEEE Trans. Mob. Computing}, 2008.

\bibitem{borthakur2008hdfs}
\BIBentryALTinterwordspacing
D.~Borthakur, ``{HDFS architecture guide},'' 2008. [Online]. Available:
  \url{http://hadoop. apache. org/common/docs/current/hdfs design.pdf}
\BIBentrySTDinterwordspacing

\bibitem{ford2010availability_etal}
D.~Ford~{\it et al.}, ``Availability in globally distributed storage systems,''
  in \emph{USENIX OSDI}, Oct. 2010.

\bibitem{rashmi2013hotstorage_etal}
K.~V. Rashmi~\textit{et al.}, ``A solution to the network challenges of data
  recovery in erasure-coded distributed storage systems: A study on the
  {F}acebook warehouse cluster,'' in \emph{Proc. USENIX HotStorage}, Jun. 2013.

\bibitem{lin2013dynamic_short}
M.~Lin, A.~Wierman, L.~Andrew, and E.~Thereska, ``Dynamic right-sizing for
  power-proportional data centers,'' \emph{IEEE/ACM Trans. Nw.}, 2013.

\bibitem{blaum1999lowest_short}
M.~Blaum and R.~Roth, ``On lowest density {MDS} codes,'' \emph{IEEE Trans. Inf.
  Th.}, 1999.

\bibitem{xu1999low_short}
L.~Xu, V.~Bohossian, J.~Bruck, and D.~Wagner, ``Low-density {MDS} codes and
  factors of complete graphs,'' \emph{IEEE Trans. Inf. Th.}, 1999.

\bibitem{plank2009raid}
J.~S. Plank, ``The {RAID}-6 liber8tion code,'' \emph{International Journal of
  High Performance Computing Applications}, vol.~23, no.~3, pp. 242--251, 2009.

\bibitem{tamo2012access_supershort}
I.~Tamo, Z.~Wang, and J.~Bruck, ``Access vs. bandwidth in codes for storage,''
  in \emph{ISIT}, Jul. 2012.

\bibitem{vogels2009eventually_short}
W.~Vogels, ``Eventually consistent,'' \emph{Comm. of the ACM}, 2009.

\bibitem{demers1994bayou_etal}
A.~Demers~{\it et al.}, ``The {B}ayou architecture: Support for data sharing
  among mobile users,'' in \emph{Proc. IEEE MCSA Workshop}, 1994.

\bibitem{minsky2003set_short}
Y.~Minsky, A.~Trachtenberg, and R.~Zippel, ``Set reconciliation with nearly
  optimal communication complexity,'' \emph{IEEE Trans. Inf. Th.}, 2003.

\bibitem{dimakis2010network_etal}
A.~Dimakis~\textit{et al.}, ``Network coding for distributed storage systems,''
  \emph{IEEE Trans. Inf. Th.}, Sep. 2010.

\bibitem{rashmi2011productmatrix_supershort}
K.~V. Rashmi, N.~B. Shah, and P.~V. Kumar, ``Optimal exact-regenerating codes
  for the {MSR} and {MBR} points via a product-matrix construction,''
  \emph{IEEE Trans. Inf. Th.}, Aug. 2011.

\end{thebibliography}
\end{document}